\newcommand{\FP}{\text{FP}}
\newenvironment{proofsketch}{%
  \proof}{\endproof}
\renewcommand{\epsilon}{\varepsilon} 
\renewcommand{\epsilon}{\varepsilon}
\newcommand{\eps}{\varepsilon}
\newcommand{\Dict}{\textsf{Dict}}
\newcommand{\MSDict}{\textsf{MS-Dict}}
\newcommand{\Ret}{\textsf{Ret}}
\newtheorem{theorem}{Theorem}
\newtheorem{lemma}[theorem]{Lemma}
\newtheorem{definition}[theorem]{Definition}
\newtheorem{claim2}[theorem]{Claim}
\newtheorem{observation}[theorem]{Observation}
\newcommand{\UU}{\mathcal{U}}
\newcommand{\hUU}{\hat{\mathcal{U}}}
\newcommand{\UUh}{\hat{\mathcal{U}}}
\newcommand{\uh}{\hat{u}}
\newcommand{\MM}{\mathcal{M}}
\newcommand{\hb}{h^b}
\newcommand{\DD}{\mathcal{D}}
\newcommand{\etal}{\textit{et al.}\xspace}
\newcommand{\size}[1]{\ensuremath{\left|#1\right|}}
\newcommand{\set}[1]{\left\{ #1 \right\}}
\newcommand{\parentheses}[1]{\left(#1\right)}
\DeclarePairedDelimiter{\ceil}{\lceil}{\rceil}
\newcommand{\expectation}[2]{\mathbb{E}_{#1}\left[ #2 \right]}
\renewcommand{\Pr}[1]{{\mathrm{Pr}}\left[ #1 \right]}
\newcommand{\NN}{\mathbb{N}}
\DeclareMathOperator{\query}{\textsf{query}}
\DeclareMathOperator{\ins}{\textsf{insert}}
\DeclareMathOperator{\del}{\textsf{delete}}
\DeclareMathOperator{\op}{\textsf{op}}
\DeclareMathOperator{\out}{\textsf{out}}
\DeclareMathOperator{\hheader}{\textsf{header}}
\DeclareMathOperator{\bbody}{\textsf{body}}
\DeclareMathOperator{\poly}{\textsf{poly}}
\newcommand{\ind}[1]{\mathds{1}_{#1}}
\newcommand{\indD}{\ind{\DD(t)}}
\begin{document}
\title{A Dynamic Space-Efficient Filter with Constant Time Operations\thanks{The conference
version of this paper has appeared in SWAT 2020. An earlier
version of this paper appears in~\cite{bercea2019fullydynamic}. The construction presented
in~\cite{bercea2019fullydynamic} is from first principles. In particular, it   does not employ reductions from the retrieval problem in the sparse case.}}
\author{
 Ioana O. Bercea\thanks{
Tel Aviv University, Tel Aviv, Israel.
Email:~\texttt{ioana@cs.umd.edu, guy@eng.tau.ac.il}. This research was supported by a
grant from the United States-Israel Binational Science Foundation
(BSF), Jerusalem, Israel, and the United States National Science
Foundation (NSF).}
\and
Guy Even\footnotemark[2]
}
\date{}
\maketitle

\begin{abstract} 
  A dynamic dictionary is a data structure that maintains sets of
  cardinality at most $n$ from a given universe and supports
  insertions, deletions, and membership queries. A filter approximates
  membership queries with a one-sided error that occurs with
  probability at most $\eps$. The goal is to obtain dynamic filters
  that are space-efficient (the space is $1+o(1)$ times the
  information-theoretic lower bound) and support all operations in
  constant time with high probability.  One approach to designing
  filters is by a reduction to the retrieval problem.  When the size of the
  universe is polynomial in $n$, this approach yields a
  space-efficient dynamic filter as long as the error parameter $\eps$
  satisfies $\log(1/\eps) = \omega(\log\log n)$.
  For the case that $\log(1/\eps) = O(\log\log n)$, we present the
  first space-efficient dynamic filter with constant time operations
  in the worst case (whp).  In contrast, the space-efficient dynamic
  filter of Pagh~\etal~\cite{DBLP:conf/soda/PaghPR05} supports
  insertions and deletions in amortized expected constant time.  Our
  approach employs the classic reduction of
  Carter~\etal~\cite{carter1978exact} on a new type of dictionary
  construction that supports random multisets.

\end{abstract}

\section{Introduction}
We consider the problem of maintaining datasets
subject to insert, delete, and membership query operations. 
Given a set $\DD$ of $n$ elements from a universe $\UUh$, a
membership query asks if the queried element $x \in \UUh$
belongs to the set $\DD$. When exact answers are required, the
associated data structure is called a \emph{dictionary}. When
one-sided errors are allowed, the associated data structure is called
a \emph{filter}. Formally, given an error parameter $\eps>0$, a filter
always answers ``yes'' when $x \in \DD$, and when $x\notin \DD$, it
makes a mistake with probability at most $\eps$. We refer to such an
error as a \emph{false positive} event%
\footnote{The probability is taken over the random choices that the
filter makes.}.

When false positives can be tolerated, the main advantage of using a
filter instead of a dictionary is that the filter requires much less
space than a dictionary~\cite{carter1978exact,lovett2010lower}.  Let
$\uh\triangleq \size{\UUh}$ be the size of the universe and $n$ denote
an upper bound on the size of the set at all points in time. The
information theoretic lower bound for the space of dictionaries is
$\ceil{\log_2 {\uh \choose n}} = n\log(\uh/n)+ \Theta(n)$
bits.\footnote{All logarithms are base $2$ unless otherwise
  stated. $\ln x$ is used to denote the natural
  logarithm.}\footnote{ This equality
  holds when $\uh$ is significantly larger than $n$.} On the other hand,
the lower bound for the space of filters is $n\log(1/\eps)$
bits~\cite{carter1978exact}. In light of these lower bounds, we call a
dictionary \textit{space-efficient} when it requires
$(1+o(1))\cdot n\log(\uh/n))+ \Theta(n)$ bits, where the term $o(1)$
converges to zero as $n$ tends to infinity. Similarly, a
space-efficient filter requires $(1+o(1))\cdot n\log(1/\eps) + O(n)$
bits.\footnote{An asymptotic expression that mixes big-O and small-o
  calls for elaboration. If $\eps=o(1)$, then the asymptotic
  expression does not require the $O(n)$ addend. If $\eps$ is
  constant, the $O(n)$ addend only emphasizes the fact that the
  constant that multiplies $n$ is, in fact, the sum of two constants:
  one is almost $\log (1/\eps)$, and the other does not depend on
  $\eps$. Indeed, the lower bound in~\cite{lovett2010lower} excludes
  space $(1+o(1))\cdot n\log(1/\eps)$ in the dynamic setting for
  constant values of $\eps$.}

When the set $\DD$ is fixed, we say that the data structure is
\emph{static}. When the data structure also supports insertions, we
say that it is \emph{incremental}. Data structures that handle both
deletions and insertions are called \emph{dynamic}.

The goal is to design dynamic dictionaries and filters that achieve
``the best of both worlds''~\cite{arbitman2010backyard}: they are
space-efficient and perform operations in constant time in the worst
case with high probability.\footnote{By with high probability (whp),
  we mean with probability at least $1-1/n^{\Omega(1)}$. The constant
  in the exponent can be controlled by the designer and only affects
  the $o(1)$ term in the space of the dictionary or the filter.}

\medskip\noindent \textbf{The Dynamic Setting.}  One approach for
designing dynamic filters was suggested by
Pagh~\etal~\cite{DBLP:conf/soda/PaghPR05}, outlined as follows.
Static (resp., incremental) filters can be obtained from static
(resp., incremental) dictionaries for sets by a reduction of
Carter~\etal~\cite{carter1978exact}.  This reduction simply hashes the
universe to a set of cardinality $n/\eps$.  Due to collisions, this
reduction does not directly lead to dynamic filters.  Indeed, if two
elements $x$ and $y$ in the dataset collide, and $x$ is deleted, how
is $y$ kept in the filter?  To overcome the problem with deletions, an
extension of the reduction to the dynamic setting was proposed by
Pagh~\etal~\cite{DBLP:conf/soda/PaghPR05}. This proposal is based on
employing a dictionary that maintains multisets rather than sets
(i.e., elements in multisets have arbitrary multiplicities). This
extension combined with a dynamic dictionary for multisets yields a
dynamic filter~\cite{DBLP:conf/soda/PaghPR05}. In fact, Pagh~\etal
obtain a dynamic filter that is space-efficient but performs
insertions and deletions in amortized constant time (but not in the
worst case).  Until recently, the design of a dynamic dictionary on
multisets that is space-efficient and performs operations in constant
time in the worst case whp was open~\cite{arbitman2010backyard}. In
this paper, we avoid the need for supporting arbitrary multisets by
observing that it suffices to support random multisets (see
Sec.~\ref{sec:reduce}).\footnote{We recently resolved the problem of
  supporting arbitrary multisets in~\cite{bercea2020spaceefficient} (thus the dictionary
  in~\cite{bercea2020spaceefficient} can support arbitrary multisets vs. the dictionary
  presented here that only supports random multisets).}

Another approach for designing filters employs retrieval data
structures. In the retrieval problem, we are given a function
$f:\DD\rightarrow \set{0,1}^k$, where $f(x)$ is called the
\emph{satellite data} associated with $x\in\DD$. When an element
$x\in\UUh$ is queried, the output $y$ must satisfy $y=f(x)$ if
$x\in\DD$ (if $x\notin\DD$, any output is allowed). By storing a
random fingerprint of length $\log(1/\eps)$ as satellite data, a
retrieval data structure can be employed as a filter at no additional
cost in space and with an $O(1)$ increase in time per
operation~\cite{dietzfelbinger2008succinct,porat2009optimal} (the
increase in time is for computing the fingerprint). This reduction was
employed in the static case and it also holds in the dynamic case (see
Sec.~\ref{sec:sparse}).

Using the dynamic retrieval data structure of
Demaine~\etal~\cite{demaine2006dictionariis}, one can obtain a filter
that requires $(1+o(1))\cdot n\log(1/\eps) + \Theta(n\log\log(\uh/n))$
bits and performs operations in constant time in the worst case whp.
When the size of the universe satisfies $\uh=\poly(n)$, this reduction
yields a space-efficient filter when the false positive probability
$\eps$ satisfies $\log(1/\eps) = \omega(\log \log n)$ (which we call
the \emph{sparse case}).\footnote{The terms ``sparse'' and ``dense''
  stem from the fact that the reduction of
  Carter~\etal~\cite{carter1978exact} is employed.  Thus, the
  filter is implemented by a dictionary that stores $n$ elements from
  a universe of cardinality $n/\eps$.}  This approach is inherently
limited to the sparse case since dynamic retrieval data structures
have a space lower bound of $\Theta(n\log\log(\uh/n))$ regardless of
the time each operation takes and even when storing two bits of
satellite data~\cite{mortensen2005dynamic}.

Thus, the only case in which a space-efficient dynamic filter with
constant time operations is not known is when
$\log(1/\eps) = O(\log \log n)$.  We refer to this case as the
\emph{dense case}.  The dense case occurs, for example, in
applications in which $n$ is large and $\eps$ is a constant (say
$\eps=1\%$).

\subsection{Our Contributions}

In this paper, we present the first dynamic space-efficient
filter for the dense case with constant time operations in the worst case whp. 
In the following theorem, we assume that the size of the universe $\UUh$ is
polynomial in $n$.\footnote{This is justified by mapping $\UUh$ to
$[\poly(n)]$ using $2$-independent hash
functions~\cite{demaine2006dictionariis}.}
We allow $\eps$ to be as small as $n/|\UUh|$ (below this threshold,
simply use a dictionary).  Memory accesses are in the RAM model in
which every memory access reads/writes a word of $\Theta(\log n)$
contiguous bits.  All computations we perform over one word take
constant time (see Sec.~\ref{sec:bin}). \emph{Overflow} refers to the event that the space
allocated for the filter does not suffice.

\begin{theorem}\label{thm:filter}
There exists a dynamic filter that maintains a set of at most
$n$ elements from a universe $\UUh=[\uh]$, where $\uh=\poly(n)$ with the
following guarantees: (1)~For every polynomial in $n$ sequence of
insert, delete, and query operations, the filter does not overflow
whp. (2)~If the filter does not overflow, then every operation
(query, insert, and delete) can be completed in constant
time. (3)~The required space is $(1+o(1))\cdot n\log (1/\eps)+O(n)$
bits. (4)~For every query, the probability of a false positive event
is bounded by $\eps$.
\end{theorem}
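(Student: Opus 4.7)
My plan is to handle the sparse regime $\log(1/\eps)=\omega(\log\log n)$ and the dense regime $\log(1/\eps)=O(\log\log n)$ separately, and then combine them via a threshold on $\log(1/\eps)$. In the sparse regime, I would follow the reduction outlined in Sec.~\ref{sec:sparse}: use the dynamic retrieval data structure of Demaine~\etal~\cite{demaine2006dictionariis} to store, for each inserted element, a random fingerprint of length $\log(1/\eps)$ as satellite data. The retrieval overhead of $\Theta(n\log\log(\uh/n))=\Theta(n\log\log n)$ bits is absorbed into the $o(1)\cdot n\log(1/\eps)$ slack available in this regime, each operation inherits worst-case constant time whp, and a false positive requires a fresh fingerprint to coincide with a stored one, which happens with probability at most $\eps$.

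For the dense regime, I would apply the Carter~\etal reduction~\cite{carter1978exact}: draw a uniform hash $h\colon\UUh\to[n/\eps]$ and reduce membership of $x$ to membership of $h(x)$ in the image of $\DD$. Since $\log(1/\eps)=O(\log\log n)$, the image universe $[n/\eps]$ still has size $\poly(n)$, and a false positive for $x\notin\DD$ occurs only when $h(x)=h(y)$ for some $y\in\DD$, which by a union bound happens with probability at most $n/(n/\eps)=\eps$. Because collisions under $h$ are possible, deletions force us to store $h(\DD)$ as a multiset; but since $h$ is uniform, this is a \emph{random} multiset (Sec.~\ref{sec:reduce}), which is much more structured than an arbitrary one: the maximum multiplicity is $O(\log n/\log\log n)$ whp and the histogram of multiplicities is tightly concentrated.

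The principal obstacle is therefore to build a dynamic dictionary for random multisets of size at most $n$ from $[n/\eps]$, using $(1+o(1))\cdot n\log(1/\eps)+O(n)$ bits and worst-case constant time per operation whp. I would follow the standard bucketing paradigm: partition $[n/\eps]$ into blocks of expected polylogarithmic load, store each block in a compact local dictionary that exploits quotienting so that only $\log(1/\eps)$ bits per element are charged globally, and encode per-element multiplicities in a small auxiliary structure whose size is controlled by the random-multiset concentration. A deamortized secondary structure would absorb the rare blocks that exceed their local space budget; because the multiset is random, such blocks occur with polynomially small probability, so a union bound over a polynomial-length operation sequence yields property~(1). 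The word-level bit tricks of Sec.~\ref{sec:bin} would then implement each per-block operation in $O(1)$ time in the worst case, giving property~(2).

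Finally, I would assemble the pieces. Property~(3) follows from the space accounting in each regime, property~(4) from the fingerprint argument (sparse) and the Carter~\etal collision bound (dense), and property~(2) by composing the worst-case guarantees of the retrieval structure and of the random-multiset dictionary with the $O(1)$ cost of computing hashes and fingerprints. Combining the two regimes proves Theorem~\ref{thm:filter}.
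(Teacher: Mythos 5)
Your high-level decomposition (split into sparse/dense by a threshold on $\log(1/\eps)$, use retrieval + fingerprints in the sparse case, use the Carter~\etal reduction to a random-multiset dictionary in the dense case, and implement that dictionary by bucketing into word-sized quotiented bins with a secondary structure absorbing overflow) is precisely the paper's decomposition, and the sparse-case argument matches Claim~\ref{claim:sparse} closely.

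There is, however, a genuine gap in your dense-case overflow argument. You write that ``because the multiset is random, such blocks occur with polynomially small probability, so a union bound over a polynomial-length operation sequence yields property~(1).'' With the parametrization that space-efficiency forces --- each bin holds $B=\Theta(\log n/\log(u/n))$ elements in expectation so that it fits in $O(1)$ words, and slack $\delta=o(1)$ so that the per-element overhead is $(1+o(1))\log(1/\eps)+O(1)$ --- the Chernoff bound only gives that a bin is full with probability about $e^{-\Theta(\delta^2 B)}=e^{-\Theta((\log\log n)^2)}$, which is $o(1)$ but nowhere near $1/\poly(n)$. Pushing the per-bin overflow probability down to $1/\poly(n)$ would force either $\delta=\Omega(1)$ (destroying space efficiency) or $B=\omega(\log n)$ (so bins no longer fit in a constant number of words), so ``no bin ever overflows whp'' is unattainable. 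What must be shown instead is that the \emph{aggregate} number of elements that ever land in the secondary structure is $o(n/\log n)$ whp, and this is subtle because the indicator that element $x_i$ is routed to the spare depends on how many other elements share its bin at insertion time, and the overflow remains ``for historical reasons'' even after subsequent deletions, so these indicators are neither independent nor obviously negatively associated. The paper's Claim~\ref{claim:spare} resolves this by dominating the true indicators $X_i$ with i.i.d.\ indicators $\hat X_i$ of membership in a fixed-size superset of the full bins, then applying Chernoff to $\sum_i \hat X_i$; some such coupling is needed, and it is missing from your plan. Two smaller imprecisions: you do not need a separate multiplicity structure inside the bins (the pocket dictionary/Elias--Fano bin encoding stores a sorted multiset, so duplicates are handled for free; the multiplicity-retrieval trick of Obs.~\ref{obs:spare} is only used in the spare), and your plan tacitly assumes fully random hashing, whereas the theorem requires succinctly representable hash families as in Sec.~\ref{sec:hash}, with the false-positive bound re-derived for $2$-independent quotient/remainder hashes.
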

Our result is based on the observation that it suffices to use the
reduction of Carter~\etal~\cite{carter1978exact} on dictionaries that
support \emph{random} multisets rather than arbitrary multisets. A
random multiset is a uniform random sample (with replacements) of
the universe.  In Sec.~\ref{sec:reduce}, we prove that the reduction
of Carter~\etal~\cite{carter1978exact} can be applied in this new
setting.  We then design a dynamic space-efficient dictionary that
works on random multisets from a universe $\UU=[u]$ with
$\log(u/n)=O(\log\log n)$ (Sec.~\ref{sec:rmsdic}). The dictionary
supports operations in constant time in the worst case whp.  Applying
the reduction of Carter~\etal~\cite{carter1978exact} to this new
dictionary yields our dynamic filter in the dense case. Together with
the filter construction for the sparse case (included, for
completeness, in Sec.~\ref{sec:sparse}), we obtain
Theorem~\ref{thm:filter}.

\subsection{Our Model}
\medskip\noindent \textbf{Memory Access Model.} We assume that the
data structures are implemented in the RAM model in which the basic
unit of one memory access is a word. Let $w$ denote the memory word
length in bits. We assume that $w=\Theta(\log n)$.  See
Sec.~\ref{sec:bin} for a discussion of
how computations over words are implemented in constant time.

\medskip\noindent
\textbf{Success Probability.}
We prove that overflow occurs with
probability at most $1/\poly(n)$ and that one can control the degree
of the polynomial (the degree of the polynomial only
affects the $o(1)$ term in the size bound). In the case of random multisets, the
probability of an overflow is a joint probability distribution over
the random choices of the dictionary and the distribution over the
realizations of the multiset. In the case of sets, the probability of
an overflow depends only on the random choices that the filter
makes.

\medskip\noindent
\textbf{Hash Functions.}
The filter for the dense case employs pairwise independent hash functions and
invertible permutations of the universe that can be
evaluated in constant time and that have a small space representation (i.e., the one-round Feistel permutations of Arbitman~\etal~\cite{arbitman2010backyard} or the
quotient hash functions of Demaine~\etal~\cite{demaine2006dictionariis}). 
For simplicity, we first analyze the filter construction assuming fully random hash functions (Sec.~\ref{sec:ovf}).
In Sec.~\ref{sec:hash}, we prove that the same arguments hold when we
use succinct hash functions.  

\medskip\noindent
\textbf{Worst Case vs. Amortized.}
An interesting application that emphasizes the importance of
worst-case performance is that of handling search engine queries. Such
queries are sent in parallel to multiple servers, whose responses are
then accumulated to generate the final output. The latency of this
final output is determined by the slowest response, thus reducing the
average latency of the final response to the worst latency among the
servers. See~\cite{broder2001using,kirsch2007using, arbitman2009amortized,arbitman2010backyard}
for  further discussion on the shortcomings of expected or amortized
performance in practical scenarios.

\medskip\noindent \textbf{The Extendable Setting.} This paper deals
with the non-extendable setting in which the bound $n$ on the
cardinality of the dataset is known in advance.  The filter is
allocated space that is efficient with respect to the lower bound on
the space of a filter with parameters $u,n,\eps$ (corresponding to the
universe's cardinality, the maximum cardinality of the set, and the
error parameter). The extendable scenario in which space must adapt to
the current cardinality of the dataset is addressed in
Pagh~\etal~\cite{pagh2013approximate}. In fact, they prove that
extendible filters require an extra $\Omega(\log \log n)$ bits per
element.

\subsection{Related Work}
The topic of dictionary and filter design is a fundamental
theme in the theory and practice of data structures.
We restrict our focus to the results that
are closest to our setting (i.e., are space-efficient, take constant
time per operation, support dynamic sets).

\medskip\noindent
\textbf{Dictionaries.}
The dictionary of Arbitman~\etal~\cite{arbitman2010backyard} is the
only space-efficient dynamic dictionary for sets that performs
all operations in constant time in the worst case with high probability.
They leave it as an open question whether one can design a dictionary on multisets
with similar guarantees. Indeed, their construction does not seem to extend
even to the case of random multisets. The main reason is that the second level
of their dictionary (the backyard), implemented as a de-amortized cuckoo hash table, does not
support duplicate elements. Moreover, the upper bound on the number of elements that the backyard stores is $\Omega\parentheses{\frac{\log\log n}{(\log n)^{1/3}}\cdot n}$. As such, 
it cannot accommodate storing naive fixed-length counters of elements (which would require $\Theta(\log n)$ bits per element) without
rendering the dictionary space-inefficient. 

The space-efficient dynamic dictionary for multisets of Pagh~\etal~\cite{DBLP:conf/soda/PaghPR05} 
supports queries in constant time, and insertions/deletions in amortized expected constant time.
For dictionaries on sets,
several dynamic constructions 
support operations in constant time with high
probability but are not
space-efficient~\cite{ dietzfelbinger1990new, dalal2005two,
demaine2006dictionariis,
arbitman2009amortized}. On the other hand, some dictionaries are
space-efficient but do not have constant time guarantees with high
probability for all of their 
operations~\cite{raman2003succinct, fotakis2005space,panigrahy2005efficient,dietzfelbinger2007balanced}. For the static case, several space-efficient constructions exist
that perform queries in constant time~\cite{brodnik1999membership,pagh2001low,patrascu2008succincter,yu2019nearly}.

\medskip\noindent
\textbf{Filters.} 
The filters of Pagh
\etal~\cite{DBLP:conf/soda/PaghPR05} and Arbitman~\etal~\cite{arbitman2010backyard} follow
from their respective dictionaries by employing the reduction of Carter~\etal~\cite{carter1978exact}.
Specifically, the dynamic filter of Pagh
\etal~\cite{DBLP:conf/soda/PaghPR05} supports  queries in constant time
and insertions and deletions in amortized expected constant time. The
incremental filter of Arbitman~\etal~\cite{arbitman2010backyard}
performs queries in constant time and insertions in constant time with
high probability. It does not support deletions.

The construction of Bender~\etal~\cite{bender2018bloom} describes a
dynamic adaptive filter that assumes access to fully random hash
functions.~\footnote{Loosely speaking, an adaptive filter is one that
  fixes false positives after they occur~\cite{bender2018bloom,
    mitzenmacher2018adaptive}.} The adaptive filter works in
conjunction with an external memory dictionary (on the set of
elements) and supports operations in constant time with high
probability (however, an insert or query operation may require
accessing the external memory dictionary). The space of the external
memory dictionary is not counted in the space of their filter. The
(in-memory) filter they employ is a variant of the dynamic quotient
filter~\cite{Cleary, DBLP:conf/soda/PaghPR05,bender2012thrash,
  DBLP:conf/sigmod/PandeyBJP17}. The space-efficient quotient filter
employs linear probing and performs operations only in expected
constant time for large values of
$\eps$~\cite{DBLP:conf/soda/PaghPR05}.  The filter
in~\cite{bender2018bloom} tries to avoid a large running time per
insert operation by bounding the displacement of the inserted
element. Hence, if (Robin Hood) linear probing does not succeed after
a constant number of words, then the element is inserted in a
secondary structure (see Sec. $5.3$ in~\cite{bender2017bloom}). There
is a gap in~\cite{bender2017bloom} regarding the question of whether
bounded displacements guarantee constant time operations in the worst
case. Specifically, searching for an element requires finding the
beginning of the ``cluster'' that contains the ``run'' associated with
that particular element. No description or proof is provided
in~\cite{bender2017bloom} that the beginning of the cluster is a
constant number of words away from the ``quotient'' in the worst case.

Other filters of interest include the dynamic filter of
Pagh~\etal~\cite{pagh2013approximate} that adjusts its space on the
fly to the cardinality of the dataset (hence, works without knowing
the size of the dataset in advance) and performs operations in
constant time.  Pagh~\etal~\cite{pagh2013approximate} also prove a
lower bound that forces a penalty of $O(\log \log n)$ per element for
such ``self-adjusting'' dynamic filters. Another filter is the cuckoo
filter, whose performance  depends on the number of
elements currently stored in the filter but that has been reported to work
well in
practice~\cite{DBLP:conf/conext/FanAKM14,DBLP:conf/swat/Eppstein16}. Space-efficient
filters for the static case have been studied
extensively~\cite{mitzenmacher2002compressed,dietzfelbinger2007balanced,dietzfelbinger2008succinct,
  porat2009optimal}.

\subsection{Paper Organization}
Preliminaries are in Sec.~\ref{sec:prelim}. The proof that the reduction of Carter~\etal~\cite{carter1978exact} can
be employed to construct dynamic filters from dynamic dictionaries
on random multisets can be found in Sec.~\ref{sec:reduce}. The filter for the dense
case is described and analyzed in Sec.~\ref{sec:rmsdic}. Section~\ref{sec:hash}
includes a discussion on how to remove the assumption of access to fully random hash functions from
Sec.~\ref{sec:ovf}. Section~\ref{sec:sparse} reviews the construction of a filter in the sparse
case based on a retrieval data structure.
 Theorem~\ref{thm:filter} is proved in Sec.~\ref{sec:prooffilter}.

\section{Preliminaries}\label{sec:prelim}

\medskip\noindent
\textbf{Notation.}
The \emph{indicator function} of a set $S$ is the function
$\mathds{1}_S: S\rightarrow \set{0,1}$ defined by
\begin{align*}
\mathds{1}_S(x)&\triangleq
\begin{cases}
1 & \text{if $x\in S$},\\
0 & \text{if $x\not\in S$}\;.
\end{cases}
\end{align*}
For any positive $k$, let $[k]$ denote the set
$\set{0,\ldots,\ceil{k}-1}$.  For a string $a \in \set{0,1}^*$, let $\size{a}$ denote the length
of $a$ in bits.

The binary representation of a natural number bijectively maps the set $[k]$ to the set
$\{0,1\}^{\ceil{\log_2 k}}$. Hence, for a hash function $h$ whose
range is $[k]$, we often treat $h(x)$ as a binary string of length
$\ceil{\log_2 k}$.

\subsection{Filter and Dictionary Definitions}
Let $\UUh$ denote the universe of all possible elements.

\medskip\noindent
\textbf{Operations.} We consider three types of operations:
\begin{itemize}
\item $\ins(x_t)$ - insert $x_t\in \UUh$ to the dataset.
\item $\del(x_t)$ - delete $x_t\in \UUh$ from the dataset.
\item $\query(x_t)$ - is $x_t\in \UUh$ in the dataset?
\end{itemize}

\medskip\noindent
\textbf{Dynamic Sets and Random Multisets.}
Every sequence of operations $R=\set{\op_{t}}_{t=1}^{T}$ defines a
\emph{dynamic set} $\DD(t)$ over $\UUh$ as follows.\footnote{ The
definition of $\DD(t)$ in Equation~\ref{eq:state} does not rule out a
deletion of $x\notin\DD(t-1)$.  However, we assume that
$op_t=\del(x_t)$ only if $x_t\in \DD(t-1)$.
}

\begin{align}\label{eq:state}
\DD(t)&\triangleq
\begin{cases}
\emptyset & \text{if $t=0$}\\
\DD(t-1)\cup \set{x_t} &\text{if $\op_t=\ins(x_t)$}\\
\DD(t-1)\setminus \set{x_t} &\text{if $\op_t=\del(x_t)$}\\
\DD(t-1)& \text{if $t>0$ and $\op_t=\query(x_t)$.}
\end{cases}
\end{align}

\begin{definition}
A \emph{multiset} $\MM$ over $\UUh$ is a function $\MM:\UUh\rightarrow
\NN$. We refer to $\MM(x)$ as the \emph{multiplicity} of $x$. If
$\MM(x)=0$, we say that $x$ is not in the multiset. We refer to
$\sum_{x\in \UUh} \MM(x)$ as the \emph{cardinality} of the multiset
and denote it by $\size{\MM}$.
\end{definition}

The \emph{support} of the multiset is the set
$\set{x \mid \MM(x)\neq 0}$. The \emph{maximum multiplicity} of a multiset is
$\max_{x\in\UUh} \MM(x)$.

A \emph{dynamic multiset} $\set{\MM_t}_t$ is specified by a sequence
of insert and delete operations. Let $\MM_t$ denote the multiset after
$t$ operations.\footnote{As in the case of dynamic sets, we allow $\op_t=\del(x_t)$ only if $\MM_{t-1}(x_t)>0$.}
\begin{align*}
\MM_t(x)&\triangleq
\begin{cases}
0 & \text{if $t=0$}\\
\MM_{t-1}(x)+1 &\text{if $\op_t=\ins(x)$}\\
\MM_{t-1}(x)-1 &\text{if $\op_t=\del(x)$}\\
\MM_{t-1}(x)& \text{otherwise.}
\end{cases}
\end{align*}
\medskip\noindent We say that a dynamic multiset $\set{\MM_t}_t$ has
cardinality at most $n$ if $\size{\MM_t}\leq n$, for every $t$.
\begin{definition}
A dynamic multiset $\MM$ over $\UUh$ is a\emph{ random
multiset} if for every $t$, the multiset $\MM_t$ is the outcome of 
independent uniform samples (with replacements) from $\UUh$.
\end{definition}

\medskip\noindent
\textbf{Dynamic Filters.}
A \emph{dynamic filter} is a data structure that maintains a dynamic set
$\DD(t)\subseteq \UUh$ and is parameterized by an error parameter
$\eps\in (0,1)$.  Consider an input sequence that specifies a dynamic
set $\DD(t)$, for every $t$.  The filter outputs a bit for every query
operation. We denote the output that corresponds to $\query(x_t)$ by
$\out_t\in\set{0,1}$.  We require that the output satisfy the
following condition:
\begin{align}
\op_t&=\query(x_t) \Rightarrow \out_t \geq \indD(x_t)\;.
\end{align}
The output $\out_t$ is an approximation of $\indD(x_t)$ with a one-sided
error.  Namely, if $x_t\in \DD(t)$, then $b_t$ must equal $1$.

\begin{definition}[false positive event]
Let $FP_t$ denote the event that $\op_t=\query(x_t)$,
$\out_t=1$ and $x_t\notin\DD(t)$.
\end{definition}

\medskip\noindent
The error parameter $\eps\in (0,1)$ is used to bound the probability
of a false positive error.
\begin{definition}
We say that the \emph{false positive probability} in a filter is
bounded by $\eps$ if it satisfies the following property. For every
sequence $R$ of operations and every $t$,

$\Pr{\FP_t}
\leq\eps\;.$
\end{definition}
The probability space in a filter is induced only by the random
choices (i.e., choice of hash functions) that the filter makes. Note
also that if $\op_t=\op_{t'}=\query(x)$, where
$x\not\in \DD(t)\cup \DD(t')$, then the events $\FP_{t}$ and
$\FP_{t'}$ may not be independent
(see~\cite{bender2018bloom,mitzenmacher2018adaptive} for a discussion
of repeated false positive events and adaptivity).

\medskip\noindent \textbf{Dynamic Dictionaries.}  A \emph{dynamic
  dictionary} 
is a dynamic filter with error parameter $\eps=0$. In the case of multisets, the response $\out_t$ of a
dynamic dictionary to a $\query(x_t)$ operation must satisfy
$\out_t=1$ iff $\MM_t(x_t)>0$.\footnote{ One may also define
  $\out_t=\MM_t(x_t)$.}

When we say that a filter or a dictionary has a capacity parameter $n$, we mean
that the cardinality of the input set/multiset is at most $n$ at all
points in time.

\medskip\noindent
\textbf{Success Probability and Probability Space.} We say that a
dictionary (filter) \emph{works} for sets and random multisets if the
probability that the dictionary does not overflow is high (i.e., it is
$\geq 1-1/\poly(n)$). The probability in the case of random multisets
is taken over both the random choices of the dictionary and the
distribution of the random multisets. In the case of sets, the success
probability depends only on the random choices of the dictionary.

\medskip\noindent
\textbf{Dense vs. Sparse.} We differentiate between two cases in the design of filters, depending on $1/\eps$. 
\begin{definition}\label{def:dense}
The \emph{dense case} occurs when $\log(1/\eps)=O(\log \log n)$.  The
\emph{sparse case} occurs when $\log(1/\eps)=\omega(\log\log n)$.
\end{definition}

\section{Reduction: Filters Based on Dictionaries}\label{sec:reduce}
In this section, we employ the reduction of
Carter~\etal~\cite{carter1978exact} to construct dynamic filters out
of dynamic dictionaries for random multisets. Our reduction can be
seen as a relaxation of the reduction of
Pagh~\etal~\cite{DBLP:conf/soda/PaghPR05}. Instead of requiring that
the underlying dictionary support multisets, we require that it only
supports random multisets. We say that a function $h:A\rightarrow B$
is \emph{fully random} if $h$ is sampled u.a.r. from the set of all
functions from $A$ to $B$.

\begin{claim2}
  Consider a fully random hash function
  $h:\UUh \rightarrow \left[\frac{n}{\eps}\right]$ and let
  $\DD\subseteq \UUh$. Then $h(\DD)$ is a random multiset of
  cardinality $\size{\DD}$.
\end{claim2}

Consider a dynamic set $\DD(t)$ specified by a sequence of insert and
delete operations. Since $h$ is random, an ``adversary'' that
generates the sequence of insertions and deletions for $\DD(t)$
becomes an oblivious adversary with respect to $h(\DD(t))$ in the
following sense. Insertion of $x$ translates to an insertion of $h(x)$
which is a random element (note that $h(x)$ may be a duplicate of a
previously inserted element\footnote{Duplicates in $h(\DD(t))$ are
  caused by collisions (i.e., $h(x)=h(y)$) rather than by
  reinsertions.}). When deleting at time $t$, the adversary specifies
a previous time $t'<t$ in which an insertion took place, and requests
to delete the element that was inserted at time $t'$.

Let $\Dict$ denote a dynamic dictionary for random multisets of
cardinality at most $n$ from the universe
$\left[\frac{n}{\eps}\right]$.
\begin{lemma}\label{lemma:reduce dynamic}
For every dynamic set $\DD(t)$ of cardinality at most $n$, a
dictionary $\Dict$ with respect to the random multiset $h(\DD(t))$
and universe $\left[\frac{n}{\eps}\right]$ is a dynamic filter for
$\DD(t)$ with capacity parameter $n$ and error parameter $\eps$.
\end{lemma}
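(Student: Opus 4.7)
The plan is to define the filter as a thin wrapper around $\Dict$: for every operation $\op_t \in \{\ins(x_t), \del(x_t), \query(x_t)\}$, the filter forwards the corresponding $\op_t(h(x_t))$ to the dictionary and, on queries, returns the dictionary's response. I would then verify three things in order: (a) the stream of operations presented to $\Dict$ is a legitimate dynamic random multiset of cardinality at most $n$; (b) the filter has no false negatives; (c) the false positive probability at every query time $t$ is at most $\eps$.

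For (a), the cardinality bound is immediate: at every time $t$, $\size{h(\DD(t))} = \size{\DD(t)} \leq n$, since each $x \in \DD(t)$ contributes exactly one copy of $h(x)$ to the multiset. For the random multiset property, I would apply the preceding Claim, which (since $h$ is fully random and $\DD(t)$ is a set of distinct elements) shows that $h(\DD(t))$ is distributed as $\size{\DD(t)}$ independent uniform samples from $\left[n/\eps\right]$. To justify that $\Dict$ is being fed a valid \emph{dynamic} random multiset, I would lean on the text's remark that each insertion on the dictionary corresponds to the insertion of a fresh uniform random image $h(x_t)$, while each deletion references the image of an element that was previously inserted (identified by its insertion time rather than by its value). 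This precisely matches the dynamic random multiset operation model for which $\Dict$ is designed.

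For (b), if $x \in \DD(t)$ then $h(x)$ appears in $h(\DD(t))$ with multiplicity at least one, so $\Dict.\query(h(x)) = 1$ and no false negative is possible. For (c), fix a query time $t$ with $x_t \notin \DD(t)$. A false positive occurs exactly when $h(x_t)$ lies in the multiset $h(\DD(t))$, i.e., when $h(x_t) = h(y)$ for some $y \in \DD(t)$. Since $h$ is fully random, for each fixed $y \in \DD(t)$ we have $\Pr{h(x_t)=h(y)} = \eps/n$, and a union bound over the at most $n$ elements of $\DD(t)$ yields $\Pr{\FP_t} \leq \eps$.

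The main subtlety, and the step that deserves the most care, is (a): the dictionary's guarantee for random multisets has to be invoked on the operation stream actually presented to $\Dict$, not merely on the snapshot multisets $h(\DD(t))$. The crucial observation I will emphasize is that the adversary that generates the original sequence on $\DD(t)$ is non-adaptive with respect to $h$ (it does not observe any hash value), so each newly inserted original $x_t$ yields an image $h(x_t)$ that is uniform over $\left[n/\eps\right]$ and mutually independent of the images of previously encountered distinct originals, and every deletion simply removes a previously-inserted image. This is exactly the oblivious setting in which the random multiset guarantees of $\Dict$ can be invoked, allowing the three parts above to compose into the claimed dynamic filter.
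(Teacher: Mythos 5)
Your proposal is correct and takes essentially the same approach as the paper: the paper's proof sketch also reduces to (i) the preceding Claim that $h(\DD)$ is a random multiset, (ii) the observation (in the paragraph before the lemma) that the adversary becomes oblivious so the stream fed to $\Dict$ is a valid dynamic random multiset, and (iii) bounding the false-positive probability by the collision probability of $h$ over a range of size $n/\eps$. You have simply made the three steps and the union bound explicit where the paper leaves them terse.
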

\begin{proofsketch}
A dictionary $\Dict$ records the multiplicity
of $h(x_t)$ in the multiset $h(\DD(t))$ and so deletions are performed correctly. The filter outputs $1$ if
and only if the multiplicity of $h(x_t)$ is positive. False positive events are caused by
collisions in $h$. Therefore, the
probability of a false positive is bounded by $\eps$ because of the
cardinality of the range of $h$.
\end{proofsketch}

\section{Fully Dynamic Filter (Dense Case)}\label{sec:rmsdic}
In this section, we present a fully dynamic filter for the dense case, i.e., 
$\log (1/\eps)=O(\log \log n)$. The reduction in
Lemma~\ref{lemma:reduce dynamic} implies that it suffices to construct
a dynamic dictionary for random multisets. We refer to this dictionary
as the \emph{RMS-Dictionary} (RMS - Random Multi-Set).

The RMS-Dictionary is a dynamic space-efficient dictionary for
random multisets of cardinality at most $n$ from a universe $\UU=[u]$,
where $u=n/\eps$. The dense case implies that
$\log(u/n)=O(\log\log n)$.

The RMS-Dictionary consists of two levels of dictionaries: a set of
\emph{bin dictionaries} (in which most of the elements are stored) and
a \emph{spare} (which stores $n_s=O(n/\log^3 n)$ elements).

\paragraph*{Parametrization.}
We use the following parameters in the construction of the
RMS-dictionary for $n$ elements.
\begin{align*}
  B'&\triangleq \frac{\ln n}{\ln (1+u/n)}=O(\log n)\\
  \delta&\triangleq \frac{\log \log n}{\sqrt{B}}=o(1)\\
  B&\triangleq \frac{\ln 2}{4(1+\delta)}\cdot B'\\
  m&\triangleq \frac{n}{B}\;.
\end{align*}
Let $m$ denote the number of bin dictionaries. So, in expectation,
each bin stores (at most) $B$ elements. To accommodate deviations from
the expectation, extra capacity is allocated in the bin
dictionaries. Namely, each bin dictionary can store up to
$(1+\delta)\cdot B$ elements.

The universe of the spare dictionary is $[u]$. However, the universe
of each bin dictionary is $[u/m]$. The justification for the reduced
universe of bin dictionaries is that the index of the bin contains
$\log m$ bits of information about the elements in it (this reduction
in the universe size is often called
``quotienting''~\cite{Knuth, pagh2001low, DBLP:conf/soda/PaghPR05,demaine2006dictionariis,bender2012thrash}).

\subsection{The Bin Dictionary}\label{sec:bin}
The bin dictionary is a (deterministic) dynamic dictionary for (small)
multisets. Let $u'$ denote the cardinality of the universe from which
elements stored in the bin dictionary are taken. Let $n'$ denote an
upper bound on the cardinality of the dynamic multiset stored in a bin
dictionary (i.e., $n'$ includes multiplicities).  The bin dictionary
must be space-efficient, namely, it must fit in
$n'\log(u'/n') + O(n')$ bits, and must support queries, insertions,
and deletions in constant time.

The specification of the bin dictionary is even more demanding than
the dictionary we are trying to construct. The point is that we focus on
parametrizations in which the bin dictionary fits in a constant number
of words.  Let
$B\triangleq \Theta\parentheses{\frac{\log n}{\log (u/n)}}$ and
$\delta\triangleq \Theta\parentheses{\frac{\log\log
    n}{\sqrt{B}}}=o(1)$.
Recall that the number of bins is $m=n/B$.
\begin{observation}
Let $u'=u/m$ and $n'=(1+\delta)\cdot B$.  The bin
dictionary for $u'$ and $n'$ fits in $O(\log n)$ bits, and hence
fits in a constant number of words.
\end{observation}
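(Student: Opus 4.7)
The plan is to verify the space bound $n'\log(u'/n') + O(n') = O(\log n)$ by direct substitution of the definitions of $u'$, $n'$, $m$, $\delta$, and $B$. First I would compute the ratio $u'/n'$: since $m = n/B$, we have $u' = u/m = uB/n$, and therefore
\[
\frac{u'}{n'} \;=\; \frac{uB/n}{(1+\delta)B} \;=\; \frac{u/n}{1+\delta},
\]
so in particular $\log(u'/n') \leq \log(u/n)$.

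Next I would bound the leading term. Plugging in $n' = (1+\delta) B = \frac{\ln 2}{4}\cdot \frac{\ln n}{\ln(1+u/n)}$, I get
\[
n' \log(u'/n') \;\leq\; \frac{\ln 2}{4}\cdot \frac{\ln n}{\ln(1+u/n)} \cdot \log(u/n) \;=\; \frac{1}{4}\cdot \frac{\ln n \cdot \ln(u/n)}{\ln(1+u/n)},
\]
using $\log_2(u/n) = \ln(u/n)/\ln 2$. The key step is to observe that since $\eps \in (0,1)$ and $u = n/\eps$, we have $u/n \geq 1$, so $\ln(u/n) \leq \ln(1+u/n)$, which bounds the expression above by $\frac{1}{4}\ln n = O(\log n)$.

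For the low-order term, I would bound $B' = \ln n/\ln(1+u/n) \leq \ln n/\ln 2 = \log n$ (again using $u/n \geq 1$), so $n' = (1+\delta) B \leq \frac{1}{4}\log n = O(\log n)$. Combining the two estimates yields $n'\log(u'/n') + O(n') = O(\log n)$ bits. Since each word has $w = \Theta(\log n)$ bits by the memory model, this space fits in a constant number of words.

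The proof is essentially a routine calculation; there is no real obstacle. The only nuance worth highlighting is recognizing that the bound $u/n \geq 1$ (equivalently, $\eps \leq 1$) is precisely what makes the ratio $\log(u/n)/\log(1+u/n)$ at most $1$ and hence forces the leading term to be $O(\log n)$. Were $u/n$ allowed to be arbitrarily small, the parameterization of $B'$ via $\ln(1+u/n)$ rather than $\ln(u/n)$ would be doing real work; under the standing assumption $\eps \in (0,1)$ the simpler bound suffices.
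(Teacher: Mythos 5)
Your proof is correct. The paper states this observation without an explicit proof, and your direct substitution is the natural way to verify it; in particular you correctly identify the crux, namely that the parametrization $B' = \ln n / \ln(1+u/n)$ and the bound $\ln(u/n)\le\ln(1+u/n)$ (valid since $\eps<1$ forces $u/n\ge 1$) make the $\log(u/n)$ factor cancel against the denominator of $B'$, leaving $n'\log(u'/n')\le\tfrac{1}{4}\ln n$. One small aside: the paper's adjacent Global Tables calculation of $s=\binom{u'+n'}{n'}\le\sqrt n$ establishes the same $O(\log n)$ fact for the information-theoretic encoding, but under the stronger hypothesis $1+u/n>e$; your argument needs only $u/n\ge1$, which follows directly from $\eps\in(0,1)$, so it is marginally cleaner.
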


We propose two
implementations of bin dictionaries that meet the specifications; one is based on lookup tables,
and the other on Elias-Fano
encoding~\cite{elias1974efficient,fano1971number} (see also Carter~\etal~\cite{carter1978exact}). 
The space
required by the bin dictionaries that employ global lookup tables meets the
information-theoretic lower bound. The space required by the Elias-Fano
encoding is within half a bit per element more than the information-theoretic
 lower bound~\cite{elias1974efficient}.

\medskip\noindent
\textbf{Global Tables. }
We follow Arbitman~\etal~\cite{arbitman2010backyard} and employ a
global lookup table common to all the bin dictionaries.
For the sake of simplicity, we discuss how insertion operations are
supported. An analogous construction works for queries and deletions.

The bin dictionary has $s\triangleq\binom{u'+n'}{n'}$
states.\footnote{Since the bin dictionary stores a multiset of cardinality at most $n'$,
the number of states is actually $\binom{u'+1+n'}{n'}$.} Hence,
we need to build a table that encodes a function
$f:[s]\times [u'] \rightarrow [s]$, such that given a state $i\in [s]$ and
an element $x\in [u']$, $f(i,x)$ encodes the state of the bin
dictionary after $x$ is inserted.  The size of the table that stores
$f$ is $s\cdot u'\cdot \log s$ bits.

We choose the following parametrization so that the table size is
$o(n)$ (recall that $n$ is the upper bound on the cardinality of the
whole multiset).  We set  
$B=\frac{\ln(2)}{4(1+\delta)}\cdot \frac{\ln n}{\ln (1+u/n)}$ (recall
that $B$ is the expected occupancy of a bin). Recall that $u'=\frac{u}{m}=\frac{u}{n/B}$
and $n'=(1+\delta)\cdot B$.  Assume that $1+u/n>e$.  Hence,
\begin{align*}
  s=\binom{u'+n'}{n'}&\leq \parentheses{\frac{e(u'+n')}{n'}}^{n'} = \exp(n'\cdot \log(1+\frac{u'}{n'}) + n')\\
&\leq \exp(n'\cdot \log(1+\frac{u}{n}) + n')\leq \exp(2n'\cdot \log(1+\frac{u}{n}))\\
					 &=\exp(2(1+\delta)\cdot B \cdot \log(1+\frac{u}{n}))\\
					& = \exp(\ln(n)/2) = \sqrt{n}
\end{align*}
Since $u'=u/m \leq \poly(\log n)$ and $\log s=O(\log n)$, we
conclude that the space required to store $f$ is $o(n)$ bits.

Operations are supported in constant time since the table is
addressed by the encoding of the current state and operation.

\medskip\noindent \textbf{Elias-Fano Encoding.}
In this section, we overview a bin dictionary implementation that
employs (a version of) the Elias-Fano encoding.  This dictionary
appears as ``Exact Membership Tester 2'' in~\cite{carter1978exact}. We
refer to this implementation as the \emph{Pocket Dictionary}.

We view each element in the universe $[u']$ as a pair $(q,r)$, where
$q\in [B]$ and $r\in[u'/B]$ (we refer to $q$ as the \emph{quotient}
and to $r$ as the \emph{remainder}). Consider a multiset
$F\triangleq \set{(q_i,r_i)}_{i=0}^{n'-1}$.  The encoding of $F$ uses
two binary strings, denoted by $\hheader(F)$ and $\bbody(F)$, as
follows.  Let $n_q\triangleq |\set{i \in [f] \mid q_i=q}|$ denote the
number of elements that share the same quotient $q$.
The vector $(n_0,\ldots,n_{B-1})$ is stored in $\hheader(F)$ in unary
as the string $1^{n_0}\circ 0 \circ \cdots \circ 1^{n_{B-1}}\circ
0$. The length of the header is $B+n'$.
The concatenation of the remainders is stored in $\bbody(F)$ in
nondecreasing lexicographic order of $\set{(q_i,r_i)}_{i\in[n']}$. The
length of the body is $n'\cdot \log (u'/B)$. 
The space required is $B+n'(1+\log (u'/n))$ bits, which meets the
required space bound since $B=O(n')$.

Operations in a Pocket Dictionary can be executed in
constant time if the Pocket Dictionary fits in a single word (see~\cite{liu2020succinct}).
Rather than employing sophisticated manipulations in the (abstract)
classical RAM model, we propose to extend the classical RAM model as
follows. All computations over $O(1)$ words that can be excuted by
circuits of depth $O(\log w)$ and size $O(w^2)$ are considered to be
executable in constant time in the new RAM model.  This extension is
motivated by the fact that, one one hand, computations such as
comparison, addition, and multiplication take constant time in modern
CPUs.  On the other hand, circuits for these computations require
$\Omega(\log w)$ depth.  As for size, multipliers are implemented using
circuits of size $\Theta(w^2)$ (simply because all the partial
products are computed).

We note that operations over Pocket Dictionaries can be supported by
circuits of depth $\log w$ with $O(w\log w)$ gates.  Consider an
insertion operation of an element $(q,r)$.  Insertion is implemented
using the following steps (all implemented by circuits):
\begin{enumerate*}[label=(\roman*)]
\item Locate in the header the positions of $q$th zero and the zero
that proceeds it (this is a select operation). Let $j$ and $i$
denote these positions within the header.  This implies that
$\sum_{q'<q} n_{q'}=i-(q-1)$ and $n_q=j-i-1$.
\item Update the header by shifting the suffix starting in position
$j$ by one position and inserting a $1$ in position $j$.
\item Read $n_q$ remainders in the body, starting from position
$i-(q-1)$.  These remainders are compared in parallel with $r$, to
determine the position $p$ within the body in which $r$ should be
inserted (this is a rank operation over the outcomes of the
comparisons).
\item Shift the suffix of the body starting with position $p$ by $|r|$
bits, and copy $r$ into the body starting at position $r$. 
\end{enumerate*}

From a practical point of view, modern instruction sets support
instructions such as rank, select, and SIMD comparisons (shifts are
standard instructions)~\cite{Reinders2013,
  DBLP:conf/sigmod/PandeyBJP17, bender2017bloom}. Hence, one can
implement Pocket Dictionary operations in constant time using such
instruction sets.

\subsection{The Spare}
The spare is a dynamic dictionary that maintains (arbitrary)
multisets of cardinality at most $n_s$ from the universe $\UU$ with
the following guarantees: (1)~For every $\poly(n)$ sequence of
operations (insert, delete, or query), the spare does not overflow
whp. (2)~If the spare does not overflow, then every operation (query,
insert, delete) takes $O(1)$ time. (3)~The required space is
$O(n_s\log u)$ bits.\footnote{Since $n_s=o(n/\log n)$ and
  $\log u = O(\log n)$, the space consumed by the spare is $o(n)$.}

We propose to implement the spare by using a dynamic dictionary
on sets with constant time operations in which counters are appended
to elements. To avoid having to discuss the details of the interior
modifications of the dictionary, we propose a black-box approach that
employs a retrieval data structure in addition to the dictionary.

\begin{observation}\label{obs:spare} Any dynamic dictionary on sets of
cardinality at most $n_s$ from the universe $\UU$ can be used to
implement a dynamic dictionary on arbitrary multisets of
cardinality at most $n_s$ from the universe $\UU$.  This reduction
increases the space of the dictionary on sets by an additional
$\Theta(\log n_s + \log\log(u/n_s))$ bits per element and increases
the time per operation by a constant.
\end{observation}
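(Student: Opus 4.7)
The plan is to implement the multiset dictionary as a pair consisting of (i)~the given dynamic dictionary on sets, used to maintain the \emph{support} $S_t\triangleq\{x\in\UU \mid \MM_t(x)>0\}$, and (ii)~a dynamic retrieval data structure (for example, the one of Demaine~\etal~\cite{demaine2006dictionariis}) that stores, for every $x\in S_t$, a counter encoding the current multiplicity $\MM_t(x)$. Since $\size{\MM_t}\leq n_s$, every multiplicity fits in $k\triangleq \ceil{\log n_s}$ bits of satellite data.

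Next I would describe how each multiset operation is simulated by a constant number of operations on the two underlying structures. For $\query(x)$, a single lookup in the set dictionary suffices, because $\MM_t(x)>0$ iff $x\in S_t$. For $\ins(x)$, I would first query the set dictionary: if $x\notin S_t$, insert $x$ into the set dictionary and insert the pair $(x,1)$ into the retrieval structure; otherwise retrieve the current count $c$ and update the retrieval entry of $x$ to $c+1$. For $\del(x)$, retrieve the current count $c$; if $c=1$, delete $x$ from both the set dictionary and the retrieval structure, and otherwise update its retrieval entry to $c-1$. Each of these uses only $O(1)$ calls to the underlying dictionaries, so the time per operation grows by a constant factor.

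For the space accounting I would invoke the guarantees of the dynamic retrieval data structure of Demaine~\etal: storing $k$-bit satellite data for up to $n_s$ elements drawn from $\UU$ costs $n_s\cdot k + \Theta(n_s\log\log(u/n_s))$ bits. Plugging in $k=\ceil{\log n_s}$ gives total retrieval overhead $\Theta(n_s(\log n_s + \log\log(u/n_s)))$, i.e.\ an extra $\Theta(\log n_s + \log\log(u/n_s))$ bits per element on top of the set dictionary, matching the claim.

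I do not expect a genuine obstacle here; the observation is essentially a black-box combination of a known dynamic set dictionary with a known dynamic retrieval structure, and the only things worth being careful about are (a)~keeping the retrieval structure's set of ``keys'' exactly synchronized with the set dictionary's support (ensured by the insert/delete case analysis above), and (b)~confirming that the retrieval structure of Demaine~\etal\ indeed supports dynamic insertions and deletions of satellite data in constant time with the stated space bound, so that its per-operation and per-element costs can be charged as above.
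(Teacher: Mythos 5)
Your proposal is correct and matches the paper's proof almost exactly: both maintain the support in the given set dictionary and store multiplicities as $\lceil\log n_s\rceil$-bit satellite data in the dynamic retrieval structure of Demaine~\etal, with the same case analysis for insert/delete and the same space accounting. The only difference is that you spell out the ``query the set dictionary first'' step for insertions slightly more explicitly, which is implicit in the paper's wording.
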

  
\begin{proof}
The dictionary on multisets ($\MSDict$) can be obtained by employing
a dictionary on sets ($\Dict$) and 
the dynamic retrieval data structure ($\Ret$) of
Demaine~\etal~\cite{demaine2006dictionariis}.  The dictionary on
sets ($\Dict$) stores the support of the input multiset.  The
retrieval data structure ($\Ret$) stores as satellite data the
multiplicity of each element in the support. The space that $\Ret$
requires is $\Theta(n_s\log n_s + n_s\log\log(u/n_s))$, since the
satellite data occupies $\log n_s$ bits.

On membership queries, the dictionary accesses $\Dict$. 
When a new element $x$ is inserted, $\MSDict$ inserts $x$ in $\Dict$ and adds $x$ with 
satellite value $1$ to $\Ret$. In the case of insertions of duplicates or deletions,
$\Ret$ is updated to reflect the current multiplicity of the element. 
If upon deletion, the multiplicity of the element reaches $0$, the element is deleted from $\Dict$ and from $\Ret$. Since the $\Ret$ supports
operations in constant time, this reduction only adds $O(1)$ time to the operations on $\Dict$.

\end{proof}

To finish the description of the spare, we set
$n_s \triangleq n/(\log^3 n)$ and employ Obs.~\ref{obs:spare} with a
dynamic dictionary on sets that requires $O(n_s\log (u/n_s))$ bits and
performs operations in constant time whp~\cite{ dietzfelbinger1990new,
  dalal2005two,demaine2006dictionariis, arbitman2009amortized,
  arbitman2010backyard}. Under our definition of $n_s$ and since
$\log(u/n)=O(\log\log n)$, the spare then requires $o(n)$
bits. Moreover, the spare does not overflow whp (see Claim~\ref{claim:spare}).

\subsection{Hash Functions}
We consider three hash functions,
the bin index, quotient, and remainder, as follows: \footnote{One could define the
domain of the quotient function $q(x)$ and the remainder function
$r(x)$ to be $[u/m]$ instead of $\UU$.} (Recall that $n=m\cdot B$.)
\begin{align*}
h^{b}&: \UU \rightarrow \left[m\right] &\text{(bin index)}\\
q&: \UU \rightarrow \left[B\right] &\text{(quotient)}\\
r&: \UU\rightarrow \left[u/n\right]&\text{(remainder)}
\end{align*}

We consider three settings for the hash functions:
\begin{enumerate*}[label=(\roman*)]
\item Fully random hash functions. 
\item In the case that the dataset is a random multiset, the values of the
hash functions are taken simply from the bits of $x$.  Namely
$h^b(x)$ is the first $\log (n/B)$ bits, $q(x)$ is the next $\log B$
bits, and $r(x)$ is the last $\log(u/n)$ bits (to be more precise,
one needs to divide $x$ and take remainders). Since $x$ is chosen
independently and uniformly at random, the hash functions are
fully random.
\item Hash functions sampled from special distributions of hash
functions (with small representation and constant evaluation time).
\end{enumerate*}

\subsection{Functionality}
A $\query(x)$ is implemented by searching for $(q(x),r(x))$ in the
bin dictionary of index $\hb(x)$. If the pair is not found, the query
is forwarded to the spare. An $\ins(x)$ operation first attempts to insert $(q(x),r(x))$ in the
bin dictionary of index $\hb(x)$. If the bin dictionary is full, it forwards the insertion to the spare. 
A $\del(x)$ operation searches for the pair $(q(x),r(x))$ in the bin dictionary
of index $\hb(x)$ and deletes it (if found). Otherwise, it
forwards the deletion to the spare.

\subsection{Overflow Analysis\protect\footnote{ The proofs in this section assume that the
hash functions are fully random.  See
Section~\ref{sec:hash} for a discussion of special
families of hash functions.
}}\label{sec:ovf}

The proposed dictionary consists of two types of dictionaries: many
small bin dictionaries and one spare. The overflow of a bin dictionary is
handled by sending the element to the spare. Hence, for correctness to
hold, we need to prove that the spare does not overflow whp.

The first challenge that one needs to address is that the dictionary
maintains a dynamic multiset $D(t)$ (see~\cite{demaine2006dictionariis}).
Consider the insertion of an element $x$ at time $t$. If bin $h^b(x)$ is
full at time $t$, then $x$ is inserted in the spare. Now suppose that
an element $y$ with $h^b(y)=h^b(x)$ is deleted at time $t+1$. Then,
bin $h^b(x)$ is no longer full, and $x$ cannot ``justify'' the fact
that it is in the spare at time $t+1$ based on the present dynamic multiset
$D(t+1)$. Indeed, $x$ is in the spare due to ``historical reasons''.

The second challenge is that the events that elements are sent to the
spare are not independent. Indeed, if $x$ is sent to the spare at time
$t$, then we know that there exists a full bin. The existence of a
full bin is not obvious if the cardinality of $D(t)$ is small. Hence,
we cannot even argue that the indicator variables for elements being sent
to the spare are negatively associated. 

The following claim bounds the number of elements stored in the spare.
Using the same proof, one could show that the number of elements in
the spare is bounded by $n/(\log n)^c$ whp, for every constant $c$.
\begin{claim2}\label{claim:spare}
The number of elements stored in the spare is less than $n/\log^3 n$ with
high probability.
\end{claim2}
\begin{proof}
Consider a dynamic multiset $D(t)$ at time $t$.  To simplify notation,
let $\set{x_1,\ldots,x_{n'}}$ denote the multiset $D(t)$ (the
elements need not be distinct and $n'\leq n$).  Let $t_i$ denote
the time in which $x_i$ was inserted to $D(t)$.  Let $X_i$ denote
the random variable that indicates if $x_i$ is stored in the spare
(i.e., $X_i=1$ iff bin $h^b(x_i)$ is full at time $t_i$).  Our goal
is to prove that the spare at time $t$ does not overflow whp, namely:
\begin{align}\label{eq:spare t}
\Pr{\sum_{i=1}^{n'} X_i \geq \frac{n}{\log^3 n}} &\leq n^{-\omega(1)}\;.
\end{align}
The claim follows from Eq.~\ref{eq:spare t} by applying a
union bound over the whole sequence of operations.

To prove Eq.~\ref{eq:spare t}, we first bound the probability that a
bin is full. Let $\gamma \triangleq e^{-\delta^2\cdot B/3}$. Fix a
bin $b$, by a Chernoff bound, the probability that bin $b$ is full
is at most $\gamma$.  Indeed:
\begin{enumerate*}[label=(\roman*)]
\item Each element belongs to bin $b$ with probability 
$B/n$. Hence, the expected occupancy of a bin is $B$.
\item The variables $\set{h^b(x_j)}_{j=1}^{n'}$ are independent.
\item A bin is full if at least $(1+\delta)\cdot B$ elements belong
to it.
\end{enumerate*}

We overcome the problem that the random variables $\set{X_i}_i$ are not
independent as follows. Let $F_t$ denote the set of full bins at
time $t$.  If $|F_t|\leq 6\gamma m$, let $\hat{F}_t$
denote an arbitrary superset of $F_t$ that contains
$6\gamma m$ bins.  Note that it is unlikely that there exists
a $t$ such that $|F_t|>6\gamma m$.  Indeed, by linearity of
expectation, $\expectation{}{|F_t|}\leq\gamma m$.  By a
Chernoff bound, $\Pr{F_t> 6\gamma m} \leq 2^{-6\gamma m}$.

Define $\hat{X}_i$ to be the random variable that indicates if
$h^b(x_i)\in \hat{F}_{t_i}$. Namely, $\hat{X}_i=1$ if $x_i$ belongs
to a full bin or a bin that was added to $\hat{F}_{t_i}$. Thus,
$\hat{X}_i\geq X_i$.  The key observation is that the random
variables $\set{\hat{X}_i}_{i=1}^{n'}$ are independent and
identically distributed because the bin indexes $\set{h^b(x_i)}_i$
are independent and uniformly distributed.

The rest of the proof is standard.  Recall that $B=O(\log n)$ and
$\delta^2=\Theta\parentheses{\frac{(\log\log n)^2}{B}}$.  

Let $G_t$ denote the event that $F_t> 6\gamma m$.  Since
$\Pr{G_t} \leq 2^{-6\gamma m}\leq 2^{-\sqrt{n}}$, by a union
bound $\Pr{\bigcup_{i=1}^{n'} G_{t_i}}\leq n\cdot 2^{-\sqrt{n}}$.

The expectation of $\hat{X}_i$ is $6\gamma$ (conditioned on the event
$\bigcap_{i=1}^{n'} \overline{G}_{t_i}$). For a sufficiently large
$n$, we have $n/\log^3n \geq 6\cdot \gamma n$. By Chernoff bound
$\Pr{\sum_{i=1}^{n'} \hat{X}_i \geq \frac{n}{\log^3 n}\bigm|
  \bigcap_{i=1}^{n'} \overline{G}_{t_i}} < 2^{-n/\log^3 n}$.

We conclude that
\begin{align}
\Pr{\sum_{i=1}^{n'} \hat{X}_i \geq \frac{n}{\log^3 n}}  & \leq
\Pr{\bigcup_{i=1}^{n'} G_{t_i}} + \Pr{\sum_{i=1}^{n'} \hat{X}_i \geq \frac{n}{\log^3 n}\bigm| \bigcap_{i=1}^{n'} \overline{G}_{t_i}}\\
&\leq n\cdot 2^{-\sqrt{n}} + 2^{-n/\log^3 n} = n^{-\omega(1)}\;,
\end{align}
and Eq.~\ref{eq:spare t} follows.
\end{proof}

\section{Succinct Hash Functions}\label{sec:hash}
In this section we discuss how to replace the fully random hash
functions from Sec.~\ref{sec:rmsdic} with succinct hash functions (i.e., representation requires
$o(n)$ bits) that have constant evaluation time in the RAM model.  Specifically,
we describe how to select hash functions $h^b(x),r(x),q(x)$ for the
RMS-dictionary used for constructing the dynamic filter in the dense
case.

The construction proceeds in two stages and uses existing succint
constructions for highly independent hash
functions~\cite{siegel2004universal,dietzfelbinger2009applications}.
First, we employ a pseudo-random permutation
$\pi:\UUh\rightarrow \UUh$. The permutation $\pi$ can be either the
one-round Feistel permutation from~\cite{arbitman2010backyard} or the
quotient permutation from~\cite{demaine2006dictionariis}. We think of
$\pi$ as a pair of functions, i.e., $\pi(x)\triangleq (h_1(x),h_2(x))$
with $h_1(x) \in [M]$ and $h_2(x)\in [\frac{\uh}{M}]$. Hence $\pi$
induces a partition of the dynamic set into $M = n^{9/10}$
subsets, where the cardinality of each subset is at most
$n^{1/10}+n^{3/40}$ whp.  Each subset consists of the $h_2(x)$ values
of the elements $x\in\DD$ that share the same $h_1(x)$ value.

In the second step, we instantiate
the RMS-Dictionary separately for each subset. 
Each dictionary instance employs the same $k$-wise independent
hash function $f^b:[\uh/M]\rightarrow [m]$ with $k=n^{1/10}+n^{3/40}$.
We define $h^b(x) = f^b(h_2(x))$ to be the bin of $x$. 
From the perspective of each dictionary instantiation, $\hb(x)$ is sampled
independently and uniformly at random, so throughout the sequence of $\poly(n)$
operations, the spare does not overflow whp.

We now describe how the quotient $q(x)$ and the remainder $r(x)$ are
chosen.  We sample a $2$-independent hash function
$(f,g):[\uh/M]\rightarrow [B]\times [1/\eps]$.
Define $q(x)\triangleq f(h_2(x))$ and $r(x)\triangleq g(h_2(x))$.

\begin{claim2}
  Consider a filter based on an RMS-dictionary that employs the hash
  functions $h^b(x),q(x),r(x)$ described in this section.  Then the
  probability of a false positive event is bounded by $(1+o(1))\eps$.
\end{claim2}
\begin{proof}
  Consider a query $y$ that is not in the dataset $D(t)$ at time $t$.
  A false positive event can occur due to two subevents: (1)~$y$
  collides with an element $x$ mapped to bin $h^b(y)$,  bin
  has overflowed, and $x$ resides in the spare. The probability of
  this subevent is bounded by the probability that bin $h^b(y)$
  overflows. (2)~$y$ collides with an element $x$ stored in bin
  $h^b(y)$. 

  The probability that bin $h^b(y)$ overflows, namely, more than
  $(1+\delta)B$ elements are mapped to bin $h^b(y)$, is bounded by
  $2e^{-\delta^2 B/3}$ (see the proof of
  Claim~\ref{claim:spare}). Since $\delta^2 B=\Theta (\log\log n)^2$, and
  since $\log (1/\eps)=O(\log \log n)$, this probability is $o(\eps)$.

  Now assume that at most $(1+\delta)B$ elements in $\DD(t)$ were
  mapped to bin $h^b(y)$. What is the probability that $y$ collides
  with $x$ ($x$ must satisfy $h_1(x)=h_1(y)$ and $h^b(x)=h^b(y)$)?
  Since $(f,g)$ are selected from a family of $2$-independent hash
  functions, the probability of a collision with such an $x$ is
  bounded by $\frac{\eps}{B}$. A union bound over the bin, implies
  that a collision with an element in the bin occurs with probability
  at most  $(1+\delta)B\cdot \frac{\eps}{B}=(1+\delta)\cdot \eps$. The
  claim follows since $\delta=o(1)$.
  \end{proof}
\section{Dynamic Filter via Retrieval (Sparse Case)}\label{sec:sparse}
In this section, we present a space-efficient dynamic filter for the
case that $\log\parentheses{\frac{1}{\epsilon}}=\omega(\log \log n)$ (sparse case).\footnote{For 
an alternate construction which does not employ any reductions in the sparse case,
 we refer the reader to~\cite{bercea2019fullydynamic}.} We let $n$
denote an upper bound on the cardinality of a dynamic set over a
$\poly(n)$ sequence of insertions and deletions. We let $\UUh$ denote a
universe of cardinality $\uh$ that satisfies
$\uh=\poly(n)$. The
construction is based on a reduction to dynamic retrieval.
The construction relies on the fact that in dynamic retrieval structures
(e.g., \cite{demaine2006dictionariis}), the overhead per element is
$O(\log\log n)$. This overhead is $o(\log (1/\eps))$ in the sparse
case.

Dietzfelbinger and Pagh~\cite{dietzfelbinger2008succinct} formulate a
reduction that uses a static retrieval data structure storing $k$ bits
of satellite data per element to implement a static filter with false
positive probability $2^{-k}$.  The reduction is based on the
assumption that retrieval data structure is ``well behaved'' with
respect to negative queries. Namely, a query for
$x\in\UUh\setminus\DD$ returns either ``fail'' or the satellite data
of an (arbitrary) element $y\in \DD$. The reduction incurs no
additional cost in space and adds $O(1)$ extra time to the query
operations (to evaluate the fingerprint).  We note that the same
reduction can be employed in the dynamic case. Specifically, the
following holds:

\begin{observation}
  Assume access to a family of pairwise independent hash functions
  $h:\UUh\rightarrow [k]$.~\footnote{We note that Dietzfelbinger and
    Pagh~\cite{dietzfelbinger2008succinct} assume access to fully
    random hash functions. Pairwise independence suffices, however, as
    noted by~\cite{porat2009optimal}.} Then any dynamic retrieval data
  structure that stores $h(x)$ as satellite data for element $x$ can
  be used to implement a dynamic filter with false positive
  probability bounded by $2^{-k}$. The space of the resulting filter is the same
  as the space of the retrieval data structure and the time per
  operation increases by a constant (due to the computation of
  $h(x)$).
\end{observation}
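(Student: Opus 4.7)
The plan is to implement the filter as a thin wrapper around the dynamic retrieval data structure $\Ret$. On $\ins(x)$, compute $h(x)$ and insert the pair $(x, h(x))$ into $\Ret$; on $\del(x)$, delete $x$ from $\Ret$; on $\query(x)$, compute $h(x)$, issue a retrieval query on $x$, and output ``yes'' if and only if the value returned by $\Ret$ equals $h(x)$. The time bound is immediate: each filter operation consists of one $\Ret$ call plus one $O(1)$-time evaluation of $h$. The space bound is also immediate: beyond $\Ret$, the filter only stores the short representation of the pairwise-independent $h$, which fits in $O(\log \uh)$ bits and is negligible compared to $\Ret$.

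Correctness splits into two parts. The no-false-negatives property holds by construction: if $x \in \DD(t)$, then $\Ret$ is required to return the satellite data of $x$, which is $h(x)$ by construction, so the filter outputs ``yes''. For the false positive bound, fix a query on $x \notin \DD(t)$. By the well-behavedness assumption on $\Ret$, the retrieval either reports ``fail''---in which case our filter correctly outputs ``no''---or returns $h(y)$ for some $y \in \DD(t)$ selected by $\Ret$. A false positive thus requires $h(y) = h(x)$. The key step is to argue that the identity of $y$ is a function only of $x$, $\DD(t)$, the operation history, and the internal randomness $\sigma$ of $\Ret$, but \emph{not} of $h$: this holds because $\Ret$ treats its satellite bits as opaque payload, so its key placement and query control flow are determined solely by $\Ret$'s own hash functions applied to the keys, and never by the values $\set{h(x') : x' \in \DD(t)}$. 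Conditioning on $\sigma$ and the history, $y$ is therefore a fixed element of $\DD(t)$ distinct from $x$ and independent of $h$, so by pairwise independence of $h$ we get $\Pr_h[h(y) = h(x)] \leq 2^{-k}$. Averaging over $\sigma$ and the history preserves the bound.

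The main obstacle is making the independence between $y$ and $h$ fully rigorous; once granted, the false positive analysis is a one-line application of pairwise independence. For the dynamic retrieval construction of Demaine~\etal~\cite{demaine2006dictionariis} this independence can be verified by direct inspection: the insert, delete, and query procedures route keys based only on hashes applied to the \emph{keys}, and the satellite bits are only consulted at the final ``read'' step of a successful lookup, so the choice of $y$ on a negative query is a deterministic function of $(x, \DD(t), \text{history}, \sigma)$ and cannot depend on $h$.
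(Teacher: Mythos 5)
Your proof is correct and, in substance, it is the argument the paper is implicitly invoking but does not write out: the paper states the Observation with no proof at all, deferring to the static reduction of Dietzfelbinger and Pagh~\cite{dietzfelbinger2008succinct} together with the footnote that pairwise independence suffices (per~\cite{porat2009optimal}). Your write-up supplies the missing details. In particular, you correctly identify the one genuinely nontrivial step and handle it properly: a false positive on $x\notin\DD(t)$ occurs only when $\Ret$ returns $h(y)$ for some $y\in\DD(t)$ with $h(y)=h(x)$, and the pairwise-independence bound $\Pr_h[h(y)=h(x)]\le 2^{-k}$ is only available if the identity of $y$ is determined independently of $h$. Your argument that $y$ is a function of $(x,\DD(t),\text{history},\sigma)$ alone, because the retrieval structure routes and places keys using its own hashes of the keys and never inspects the satellite payload, is exactly the ``well-behavedness'' assumption the paper states informally, made rigorous. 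One small notational remark: the paper writes $h:\UUh\to[k]$, but since the satellite data is $k$ bits and the claimed bound is $2^{-k}$, the intended range is $\{0,1\}^k$ (equivalently $[2^k]$); you implicitly used the intended range, which is the right reading.
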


The question of designing a space-efficient dynamic filter now boils down to:
\begin{enumerate*}[label=(\roman*)]
\item Choose a suitable dynamic retrieval data structure.
\item Determine the range of false positive probabilities $\eps$ for
which the reduction yields a space-efficient dynamic filter.
\end{enumerate*}
We resolve this question by employing the retrieval data structure of
Demaine~\etal~\cite{demaine2006dictionariis} in the sparse case.

\begin{claim2}\label{claim:sparse}
There exists a dynamic
filter in the sparse case that maintains a set of at most $n$ elements from the
universe $\UUh=[\uh]$, where $\uh=\poly(n)$ such that, for any $\eps$ such that
$\log(1/\eps)=\omega(\log\log n)$, the following hold: (1)~For every
sequence of $\poly(n)$ opertations (i.e, insert, delete, or query),
the filter does not overflow whp. (2)~If the filter does not
overflow, then every operation (query, insert, and delete) takes
$O(1)$ time. (3)~The required space is
$(1+o(1))\cdot n\log (1/\eps)$ bits. (4)~ For every query, the
probability of a false positive event is bounded by $\eps$.
\end{claim2}
\begin{proof}
  The dynamic retrieval data structure
  in~\cite{demaine2006dictionariis} uses a dynamic perfect hashing
  data structure for $n$ elements from the universe $\hUU$ of size $\hat{u}$
  that maps each element to a unique value in a given range $[n+t]$,
  for any $t>0$.~\footnote{ We note that one could also use the
    dynamic perfect hashing scheme proposed in
    Mortesen~\etal~\cite{mortensen2005dynamic} with similar space and
    performance guarantees.}  The space that the perfect hashing data
  structure occupies is
  $\Theta\parentheses{n\log\log\frac{\uh}{n} +
    n\log\frac{n}{t+1}}$. All operations are performed in $O(1)$ time
  and the perfect hashing data structure fails with $1/\poly(n)$
  probability over a sequence of $\poly(n)$ operations. A retrieval
  data structure can be obtained by allocating an array of $n+t$
  entries, each used to store satellite data of $k$ bits. The
  satellite data associated with an element $x$ is stored at the
  position in the array that corresponds to the hash code associated
  with $x$. The retrieval data structured obtained this way occupies
  $(n+t)\cdot k + \Theta\parentheses{n\log\log\frac{\uh}{n} +
    n\log\frac{n}{t+1}}$ bits.  It performs every operation in
  constant time and fails with probability $1/\poly(n)$ over a
  sequence of $\poly(n)$ operations.

For the filter construction, we set $k\triangleq \log(1/\eps)$ and $t\triangleq n/\log n$.
Since $\log\log(\uh/n) = O(\log\log n)$ and $\log(1/\eps) = \omega(\log\log n)$, the filter we obtain
occupies $(1+o(1))\cdot n\log(1/\eps)$ bits. It performs all operations in constant time and  does not
overflow whp over a sequence of $\poly(n)$ operations.
\end{proof}

\section{Proof of Theorem~\ref{thm:filter}}\label{sec:prooffilter}
The proof of Thm.~\ref{thm:filter} deals with the sparse case and
dense case separately. The theorem for the sparse case, in which
$\log(1/\eps)=\omega(\log\log n)$, is proven in Sec.~\ref{sec:sparse}.

The proof for the dense case employs the reduction in
Lemma~\ref{lemma:reduce dynamic} with the RMS-Dictionary construction
described in Sec.~\ref{sec:rmsdic}.  Let $\UU=[u]$ where $u=n/\eps$
denotes the universe of an RMS-Dictionary that can store a random
multiset of cardinality at most $n$. In this case, the assumption
that $\log(1/\eps)=O(\log\log n)$ translates into
$\log(u/n)=O(\log\log n)$.

The time per operation is constant because the RMS-dictionary supports
operations in constant time.  The space consumed by the RMS-Dictionary
equals the sum of spaces for the bin dictionaries and the spare. This
amounts to
$m\cdot n'\cdot \log (u'/n')+ m\cdot O(n') +o(n)\leq (1+\delta)\cdot
n\cdot \log (1/\eps) +O(n)$.  Since $\delta=o(1)$, the filter is
space-efficient, as required.  Finally, by Claim~\ref{claim:spare},
the spare does not overflow whp.

\section*{Acknowledgments}
We would like to thank Michael Bender, Martin Farach-Colton,
and Rob Johnson for introducing us to this topic and for interesting
conversations.
Many thanks to Tomer Even for helpful and thoughtful remarks.

\bibliography{main}

\newcommand{\etalchar}[1]{$^{#1}$}
\begin{thebibliography}{DadHPP06}

\bibitem[ANS09]{arbitman2009amortized}
Yuriy Arbitman, Moni Naor, and Gil Segev.
\newblock De-amortized cuckoo hashing: Provable worst-case performance and
  experimental results.
\newblock In {\em International Colloquium on Automata, Languages, and
  Programming}, pages 107--118. Springer, 2009.

\bibitem[ANS10]{arbitman2010backyard}
Yuriy Arbitman, Moni Naor, and Gil Segev.
\newblock Backyard cuckoo hashing: Constant worst-case operations with a
  succinct representation.
\newblock In {\em 2010 IEEE 51st Annual Symposium on Foundations of Computer
  Science}, pages 787--796. IEEE, 2010.

\bibitem[BE19]{bercea2019fullydynamic}
Ioana~Oriana Bercea and Guy Even.
\newblock Fully-dynamic space-efficient dictionaries and filters with constant
  number of memory accesses.
\newblock {\em CoRR}, abs/1911.05060, 2019.

\bibitem[BE20]{bercea2020spaceefficient}
Ioana~Oriana Bercea and Guy Even.
\newblock A space-efficient dynamic dictionary for multisets with constant time
  operations.
\newblock {\em CoRR}, abs/2005.02143, 2020.

\bibitem[BFG{\etalchar{+}}17]{bender2017bloom}
Michael~A. Bender, Martin Farach{-}Colton, Mayank Goswami, Rob Johnson, Samuel
  McCauley, and Shikha Singh.
\newblock {Bloom} filters, adaptivity, and the dictionary problem.
\newblock {\em CoRR}, abs/1711.01616, 2017.

\bibitem[BFG{\etalchar{+}}18]{bender2018bloom}
Michael~A. Bender, Martin Farach{-}Colton, Mayank Goswami, Rob Johnson, Samuel
  McCauley, and Shikha Singh.
\newblock {Bloom} filters, adaptivity, and the dictionary problem.
\newblock In {\em 59th {IEEE} Annual Symposium on Foundations of Computer
  Science, {FOCS} 2018, Paris, France, October 7-9, 2018}, pages 182--193,
  2018.

\bibitem[BFJ{\etalchar{+}}12]{bender2012thrash}
Michael~A. Bender, Martin Farach{-}Colton, Rob Johnson, Russell Kraner,
  Bradley~C. Kuszmaul, Dzejla Medjedovic, Pablo Montes, Pradeep Shetty,
  Richard~P. Spillane, and Erez Zadok.
\newblock Don't thrash: How to cache your hash on flash.
\newblock {\em {PVLDB}}, 5(11):1627--1637, 2012.

\bibitem[BM99]{brodnik1999membership}
Andrej Brodnik and J.~Ian Munro.
\newblock Membership in constant time and almost-minimum space.
\newblock {\em SIAM Journal on Computing}, 28(5):1627--1640, 1999.

\bibitem[BM01]{broder2001using}
Andrei Broder and Michael Mitzenmacher.
\newblock Using multiple hash functions to improve ip lookups.
\newblock In {\em Proceedings IEEE INFOCOM 2001. Conference on Computer
  Communications. Twentieth Annual Joint Conference of the IEEE Computer and
  Communications Society (Cat. No. 01CH37213)}, volume~3, pages 1454--1463.
  IEEE, 2001.

\bibitem[CFG{\etalchar{+}}78]{carter1978exact}
Larry Carter, Robert Floyd, John Gill, George Markowsky, and Mark Wegman.
\newblock Exact and approximate membership testers.
\newblock In {\em Proceedings of the tenth annual ACM symposium on Theory of
  computing}, pages 59--65. ACM, 1978.

\bibitem[Cle84]{Cleary}
J.~G. Cleary.
\newblock Compact hash tables using bidirectional linear probing.
\newblock {\em IEEE Trans. Comput.}, 33(9):828–834, September 1984.

\bibitem[DadH90]{dietzfelbinger1990new}
Martin Dietzfelbinger and Friedhelm~Meyer auf~der Heide.
\newblock A new universal class of hash functions and dynamic hashing in real
  time.
\newblock In {\em International Colloquium on Automata, Languages, and
  Programming}, pages 6--19. Springer, 1990.

\bibitem[DadHPP06]{demaine2006dictionariis}
Erik~D Demaine, Friedhelm~Meyer auf~der Heide, Rasmus Pagh, and Mihai P{\u
  a}tra{\c{s}}cu.
\newblock De dictionariis dynamicis pauco spatio utentibus.
\newblock In {\em Latin American Symposium on Theoretical Informatics}, pages
  349--361. Springer, 2006.

\bibitem[DDMM05]{dalal2005two}
Ketan Dalal, Luc Devroye, Ebrahim Malalla, and Erin McLeish.
\newblock Two-way chaining with reassignment.
\newblock {\em SIAM Journal on Computing}, 35(2):327--340, 2005.

\bibitem[DP08]{dietzfelbinger2008succinct}
Martin Dietzfelbinger and Rasmus Pagh.
\newblock Succinct data structures for retrieval and approximate membership.
\newblock In {\em International Colloquium on Automata, Languages, and
  Programming}, pages 385--396. Springer, 2008.

\bibitem[DR09]{dietzfelbinger2009applications}
Martin Dietzfelbinger and Michael Rink.
\newblock Applications of a splitting trick.
\newblock In {\em International Colloquium on Automata, Languages, and
  Programming}, pages 354--365. Springer, 2009.

\bibitem[DW07]{dietzfelbinger2007balanced}
Martin Dietzfelbinger and Christoph Weidling.
\newblock Balanced allocation and dictionaries with tightly packed constant
  size bins.
\newblock {\em Theoretical Computer Science}, 380(1-2):47--68, 2007.

\bibitem[Eli74]{elias1974efficient}
Peter Elias.
\newblock Efficient storage and retrieval by content and address of static
  files.
\newblock {\em Journal of the ACM (JACM)}, 21(2):246--260, 1974.

\bibitem[Epp16]{DBLP:conf/swat/Eppstein16}
David Eppstein.
\newblock Cuckoo filter: Simplification and analysis.
\newblock In {\em 15th Scandinavian Symposium and Workshops on Algorithm
  Theory, {SWAT} 2016, June 22-24, 2016, Reykjavik, Iceland}, pages 8:1--8:12,
  2016.

\bibitem[FAKM14]{DBLP:conf/conext/FanAKM14}
Bin Fan, David~G. Andersen, Michael Kaminsky, and Michael Mitzenmacher.
\newblock Cuckoo filter: Practically better than {Bloom}.
\newblock In {\em CoNEXT}, pages 75--88. {ACM}, 2014.

\bibitem[Fan71]{fano1971number}
Robert~Mario Fano.
\newblock On the number of bits required to implement an associative memory.
  memorandum 61.
\newblock {\em Computer Structures Group, Project MAC, MIT, Cambridge, Mass.},
  1971.

\bibitem[FPSS05]{fotakis2005space}
Dimitris Fotakis, Rasmus Pagh, Peter Sanders, and Paul Spirakis.
\newblock Space efficient hash tables with worst case constant access time.
\newblock {\em Theory of Computing Systems}, 38(2):229--248, 2005.

\bibitem[KM07]{kirsch2007using}
Adam Kirsch and Michael Mitzenmacher.
\newblock Using a queue to de-amortize cuckoo hashing in hardware.
\newblock In {\em Proceedings of the Forty-Fifth Annual Allerton Conference on
  Communication, Control, and Computing}, volume~75, 2007.

\bibitem[Knu73]{Knuth}
Donald~E Knuth.
\newblock The art of computer programming, vol. 3: Searching and sorting.
\newblock {\em Reading MA: Addison-Wisley}, 1973.

\bibitem[LP10]{lovett2010lower}
Shachar Lovett and Ely Porat.
\newblock A lower bound for dynamic approximate membership data structures.
\newblock In {\em 2010 IEEE 51st Annual Symposium on Foundations of Computer
  Science}, pages 797--804. IEEE, 2010.

\bibitem[LYY20]{liu2020succinct}
Mingmou Liu, Yitong Yin, and Huacheng Yu.
\newblock Succinct filters for sets of unknown sizes, 2020.

\bibitem[Mit02]{mitzenmacher2002compressed}
Michael Mitzenmacher.
\newblock Compressed {Bloom} filters.
\newblock {\em IEEE/ACM Transactions on Networking (TON)}, 10(5):604--612,
  2002.

\bibitem[MPP05]{mortensen2005dynamic}
Christian~Worm Mortensen, Rasmus Pagh, and Mihai P{\u a}tra{\c{s}}cu.
\newblock On dynamic range reporting in one dimension.
\newblock In {\em Proceedings of the thirty-seventh annual ACM symposium on
  Theory of computing}, pages 104--111. ACM, 2005.

\bibitem[MPR18]{mitzenmacher2018adaptive}
Michael Mitzenmacher, Salvatore Pontarelli, and Pedro Reviriego.
\newblock Adaptive cuckoo filters.
\newblock In {\em 2018 Proceedings of the Twentieth Workshop on Algorithm
  Engineering and Experiments (ALENEX)}, pages 36--47. SIAM, 2018.

\bibitem[Pag01]{pagh2001low}
Rasmus Pagh.
\newblock Low redundancy in static dictionaries with constant query time.
\newblock {\em SIAM Journal on Computing}, 31(2):353--363, 2001.

\bibitem[Pan05]{panigrahy2005efficient}
Rina Panigrahy.
\newblock Efficient hashing with lookups in two memory accesses.
\newblock In {\em Proceedings of the sixteenth annual ACM-SIAM symposium on
  Discrete algorithms}, pages 830--839. Society for Industrial and Applied
  Mathematics, 2005.

\bibitem[P{\u a}t08]{patrascu2008succincter}
Mihai P{\u a}tra{\c s}cu.
\newblock Succincter.
\newblock In {\em 2008 49th Annual IEEE Symposium on Foundations of Computer
  Science}, pages 305--313. IEEE, 2008.

\bibitem[PBJP17]{DBLP:conf/sigmod/PandeyBJP17}
Prashant Pandey, Michael~A. Bender, Rob Johnson, and Robert Patro.
\newblock A general-purpose counting filter: Making every bit count.
\newblock In {\em Proceedings of the 2017 {ACM} International Conference on
  Management of Data, {SIGMOD} Conference 2017, Chicago, IL, USA, May 14-19,
  2017}, pages 775--787, 2017.

\bibitem[Por09]{porat2009optimal}
Ely Porat.
\newblock An optimal {Bloom} filter replacement based on matrix solving.
\newblock In {\em International Computer Science Symposium in Russia}, pages
  263--273. Springer, 2009.

\bibitem[PPR05]{DBLP:conf/soda/PaghPR05}
Anna Pagh, Rasmus Pagh, and S.~Srinivasa Rao.
\newblock An optimal {Bloom} filter replacement.
\newblock In {\em {SODA}}, pages 823--829. {SIAM}, 2005.

\bibitem[PSW13]{pagh2013approximate}
Rasmus Pagh, Gil Segev, and Udi Wieder.
\newblock How to approximate a set without knowing its size in advance.
\newblock In {\em 2013 IEEE 54th Annual Symposium on Foundations of Computer
  Science}, pages 80--89. IEEE, 2013.

\bibitem[Rei13]{Reinders2013}
James Reinders.
\newblock {AVX-512 Instructions}.
\newblock
  \url{https://software.intel.com/en-us/articles/intel-avx-512-instructions},
  2013.

\bibitem[RR03]{raman2003succinct}
Rajeev Raman and Satti~Srinivasa Rao.
\newblock Succinct dynamic dictionaries and trees.
\newblock In {\em International Colloquium on Automata, Languages, and
  Programming}, pages 357--368. Springer, 2003.

\bibitem[Sie04]{siegel2004universal}
Alan Siegel.
\newblock On universal classes of extremely random constant-time hash
  functions.
\newblock {\em SIAM Journal on Computing}, 33(3):505--543, 2004.

\bibitem[Yu19]{yu2019nearly}
Huacheng Yu.
\newblock Nearly optimal static {Las} {Vegas} succinct dictionary.
\newblock {\em arXiv preprint arXiv:1911.01348}, 2019.

\end{thebibliography}
\end{document}